\documentclass[journal]{IEEEtran}
\usepackage{amsmath,amssymb,amsfonts}
\usepackage{graphicx}
\usepackage{cite}
\usepackage{hyperref}
\usepackage{bm} 
\usepackage{booktabs} 
\usepackage{siunitx}  
\usepackage{booktabs} 
\usepackage{siunitx}  
\usepackage{graphicx} 
\usepackage{subcaption} 
\usepackage{algorithmic}
\usepackage{algorithm}
\usepackage{tabularx}
\usepackage{optidef}

\usepackage{amsthm}
\usepackage{svg}
\theoremstyle{plain}
\newtheorem{definition}{Definition}
\newtheorem{proposition}{Proposition}
\newtheorem{remark}{Remark}
\usepackage[font=small]{caption}

\ifCLASSINFOpdf
\else
\fi
\makeatletter
\def\sublargesize{\@setfontsize{\sublargesize}{10pt}{12pt}}
\makeatother

\begin{document}
%
\title{{\fontsize{22}{24}\selectfont Grid-Interactive Thermal Management of AI Data Centers via Contextual Distributionally Robust Optimization}}

\author{
Jiachen Shen,~\IEEEmembership{Student Member,~IEEE,}
	Jian Shi,~\IEEEmembership{Senior Member,~IEEE,}
		  Yijie Yang,~\IEEEmembership{Member,~IEEE,}
		Chenye Wu,~\IEEEmembership{Senior Member,~IEEE,}
        Dan Wang,~\IEEEmembership{Senior Member,~IEEE,}
        Ju Bin Song,~\IEEEmembership{Member,~IEEE,}
		and Zhu Han,~\IEEEmembership{Fellow,~IEEE}
		\vspace{-10mm}






}

\maketitle

\begin{abstract}
Thermal management in AI data centers is increasingly challenged by bursty workloads and uncertain heat generation. To prevent thermal violations, existing cooling strategies either enforce conservative, rigid bounds that severely limit grid responsiveness, or rely on forecast-driven controllers that perform poorly under AI workload uncertainty and distribution shifts. To overcome the above challenges, this paper proposes a Contextual Distributionally Robust Optimization (CDRO) framework for grid-interactive cooling control. Unlike standard DRO with fixed ambiguity sets, the proposed approach dynamically adapts the Wasserstein radius using real-time AI and grid context. This safely shrinks uncertainty bounds during stable regimes, unlocking deep demand-side flexibility. Theoretically, we formulate the control as an infinite-dimensional inf–sup problem, derive an exact tractable reformulation for the Wasserstein worst-case expected-cost term, and then derive a tractable conservative deterministic counterpart for the Distributionally Robust Conditional Value at Risk (DR-CVaR) thermal safety constraint. Solved via a scalable nested Alternating Direction Method of Multipliers (ADMM) algorithm, the CDRO controller achieves near-zero thermal violations under extreme workload spikes in high-fidelity EnergyPlus co-simulations. Simultaneously, it reduces the operational cost premium of robustness by approximately 13.7 percentage points relative to standard Min–Max Model Predictive Control (MPC).

\end{abstract}
\begin{IEEEkeywords}
AI Data Centers, Contextual Distributionally Robust Optimization, AI Workload Uncertainty, Thermal Management.

\end{IEEEkeywords}

\IEEEpeerreviewmaketitle
\vspace{-2mm}
\section{Introduction}
\label{sec:introduction}

\subsection{Motivation and Challenges}
The rapid growth of Generative Artificial Intelligence (GenAI) has significantly changed the operational landscape of data centers \cite{AIIndex2024}. This elevates thermal management to a critical infrastructure bottleneck alongside computational capacity. Modern AI accelerators, like NVIDIA's H100 GPUs, exhibit a Thermal Design Power (TDP) exceeding 700W per chip. This pushes rack power densities beyond 50kW, which is ten times that of traditional CPU-based servers \cite{iea2024electricity}. This rapid rise in heat flux requires powerful cooling capacities. As a result, the cooling system becomes the largest auxiliary energy consumer (accounting for 30\%--40\% of total facility energy) and the most critical subsystem for hardware reliability. Also, the central operational challenge has shifted from general energy scheduling to the precise control of cooling loads: maintaining thermal safety under extreme heat densities while attempting to adjust these massive cooling loads to respond to volatile power grid signals (e.g., locational marginal prices (LMP) or carbon intensity).

Moreover, the inflexible operational strategies of legacy cooling systems are increasingly in conflict with the decarbonization goals of the energy sector. As the grid integrates higher shares of variable renewable energy, electricity prices and carbon intensity become highly volatile \cite{Chen2024_FlowBasedCarbonAccounting, Chen2024_CarbonAwareDR}. Static cooling strategies (e.g., constant setpoint or Proportional-Integral-Derivative (PID) control), which consume power regardless of these external signals, not only incur excessive operational costs but also worsen grid congestion during peak hours \cite{Takci2025_DCFlexibility}. There is an urgent need to utilize the unused flexibility within the cooling infrastructure. This allows it to act as a responsive demand-side resource that can absorb renewable generation when it is abundant and shed load when the grid is stressed.

To address this, the concept of the grid-interactive data center has emerged recently. Here, cooling infrastructure functions as a flexible thermal battery, capable of pre-cooling during periods of renewable abundance and shedding load during grid stress \cite{Takci2025_DCFlexibility}. While promising, transforming AI data center cooling into a flexible asset faces two distinct control challenges. These stem directly from the unique physical characteristics of AI cooling loads:

\textit{1) Timescale Mismatch between Thermal Shock and Cooling Dynamics:} 
The first and most formidable challenge is the significant gap between the volatility of AI thermal loads and the response speed of industrial cooling infrastructure. AI inference and training workloads are characterized by extreme burstiness. A sudden surge in matrix multiplication operations can ramp up heat generation from idle to peak in milliseconds, thus creating immediate localized thermal shocks. In contrast, the heat rejection chain is governed by slow thermo-fluid dynamics. This infrastructure relies on heavy mechanical components like chillers and pumps, which operate on time constants ranging from minutes to hours \cite{Cao2022_IDC_Flexibility}. This operational mismatch renders the cooling system naturally slow. Traditional controllers cannot anticipate these millisecond-level heat spikes. They often fail to ramp up cooling capacity in time. This leads to rapid heat accumulation and potential thermal runaway.

\textit{2) Over-Cooling Trap vs. Demand Flexibility:} 
Due to the aforementioned thermal risks, current industry practices rely on extreme conservatism, known as over-cooling. This involves maintaining setpoints far lower than necessary to create a safety buffer against unforeseen load spikes. While physically safe, this static conservatism rigidly locks the cooling power at high levels. It effectively eliminates the facility's ability to participate in Demand Response \cite{Takci2025_DCFlexibility}. This creates a fundamental conflict in cooling load management. To provide flexibility, the system must reduce cooling power when grid prices are high. This requires operating closer to its thermal limits. However, doing so with standard controllers (like PID or deterministic MPC) creates risks. It exposes mission-critical hardware to unacceptable overheating caused by the unpredictable nature of AI tasks. Thus, the core problem is not merely optimizing energy, but designing a control framework that can \textit{dynamically} balance thermal safety margins and grid flexibility based on real-time contexts.
\vspace{-4mm}
\subsection{Literature Review}
The need for efficient cooling load management has driven significant research into advanced data center control strategies. This drive is closely aligned with the Green AI paradigm \cite{schwartz2020green}, yet specifically focuses on the thermo-mechanical layer of infrastructure operation.

\textit{1) Limitations of Existing Cooling Control:} The industry standard, PID control, is purely reactive and ill-suited for the high-density cooling requirements of AI clusters. It lacks the foresight to manage heat accumulation, often resulting in oscillatory behavior and significant energy waste \cite{ControlSurvey_DataCenter_2020}. Model Predictive Control advances this by utilizing physics-based models to optimize cooling actions over a receding horizon \cite{Nassif2012_MPC_DC}. However, deterministic MPC is highly sensitive when applied to AI cooling loads. It relies on point forecasts of IT power, which are practically impossible to predict accurately due to the stochastic arrival of inference queries \cite{Gao2022_WorkloadPrediction}. A minor prediction error during a cooling load shedding event can lead to immediate Service Level Agreement (SLA) violations \cite{ Gao2022_WorkloadPrediction}. While Reinforcement Learning (RL) offers a model-free alternative \cite{ Li2019_RLCooling}, standard RL agents lack the rigorous safety guarantees required to operate critical cooling plants, often requiring unsafe exploration phases \cite{Wang2024_SafeDCCooling}.

\textit{2) Optimization under Uncertainty and Distributional Robustness:} Uncertainty in power and energy systems is commonly handled via stochastic programming (SP), which typically assumes known probability distributions and can become sensitive to distribution shift, and via robust optimization (RO), which enforces feasibility under bounded uncertainty sets at the cost of conservatism. In the receding-horizon control literature, several uncertainty-aware MPC families have been developed to balance performance and safety: tube/min--max MPC enforces robust feasibility against bounded disturbances \cite{Kiaei2018TubeBasedMPCTSG,Xie2021EnhancedTubeMPCTSG}, chance-constrained MPC enforces probabilistic safety via quantile/violation-rate constraints under assumed or empirically estimated error models \cite{Ravichandran2018ChanceConstraintsTSG}, and risk-sensitive MPC (e.g., Conditional Value at Risk (CVaR) based designs) explicitly trades expected cost for tail-risk mitigation \cite{Rosewater2020RiskAverseMPCTSG}. More recently, distributionally robust formulations have gained traction by optimizing against an ambiguity set of distributions (e.g., Wasserstein balls), yielding DR-MPC variants that protect against distributional shift without committing to a single parametric forecast-error distribution \cite{Nguyen2023DRMPC_EVCS_TSG,Li2024WDRMPCTSG}; related distributed robust MPC formulations have also been studied for multi-microgrid coordination under uncertainty \cite{Zhao2022DistributedRobustMPCTSG}. 

However, these approaches are typically non-contextual: the level of conservatism is governed by static bounds, fixed quantiles/CVaR levels, or a globally chosen ambiguity radius, which can be inefficient when uncertainty regimes change rapidly with internal compute states and external grid conditions. This motivates a contextual distributionally robust framework that leverages multi-modal context (e.g., queue/workload states and grid volatility signals) to adapt robustness online, enabling calibrated conservatism while maintaining thermal safety under deep and regime-shifting uncertainty.
\vspace{-6mm}
\subsection{Our Contributions}
To address the research gaps above, we propose an online CDRO framework for carbon-aware thermal management in AI data centers. We formulate the cooling control as an inf–sup optimization problem over an infinite-dimensional space of probability measures, constrained by non-convex thermo-fluid dynamics. This problem is computationally intractable to solve directly because it tightly couples deep distributional uncertainty with non-convex Mixed-Integer Second-Order Cone Program (MISOCP) plant constraints, including cubic fan power laws and bi-quadratic chiller efficiencies. To solve it in real-time dispatch, we derive exact tractable reformulations and an ADMM-based decomposition algorithm. Specifically, this paper makes three main contributions:
\begin{itemize}
    \item We propose the first-of-its-kind online CDRO framework for grid-interactive AI data center cooling to our knowledge. By dynamically adapting the Wasserstein radius using multi-modal real-time context (e.g., workload/queue states and grid volatility signals), the controller calibrates conservatism across regimes while enforcing a Distributionally robust-CVaR thermal management constraint to control tail overheating risk and unlock grid-responsive flexibility.


    \item We derive tractable deterministic reformulations for the key distributionally robust terms. In particular, we obtain an exact reformulation of the Wasserstein worst-case expected-cost term via strong duality, and a tractable conservative deterministic counterpart for the DR-CVaR thermal safety constraint to avoid ad hoc worst-case constraint tightening.
    
    
    \item  We design a nested solution algorithm leveraging the ADMM to decouple subsystem optimizations and efficiently resolve the coupled MISOCP constraints. Furthermore, we demonstrate that the computational complexity of the proposed online controller scales linearly with the prediction horizon and spatial dimensions, guaranteeing theoretical tractability and highly scalable real-time execution for standard 5-minute control intervals.

\end{itemize}

The remainder of this paper is organized as follows. Section \ref{sec:problem_formulation} details the thermo-fluid dynamics and problem formulation. Section \ref{sec:methodology} describes the proposed CDRO methodology. Section \ref{sec:experimental_evaluation} presents the experimental setup and numerical results, followed by the conclusion in Section \ref{sec:conclusion}.

\begin{figure}[t]
    \centering \includegraphics[width=0.45\textwidth]{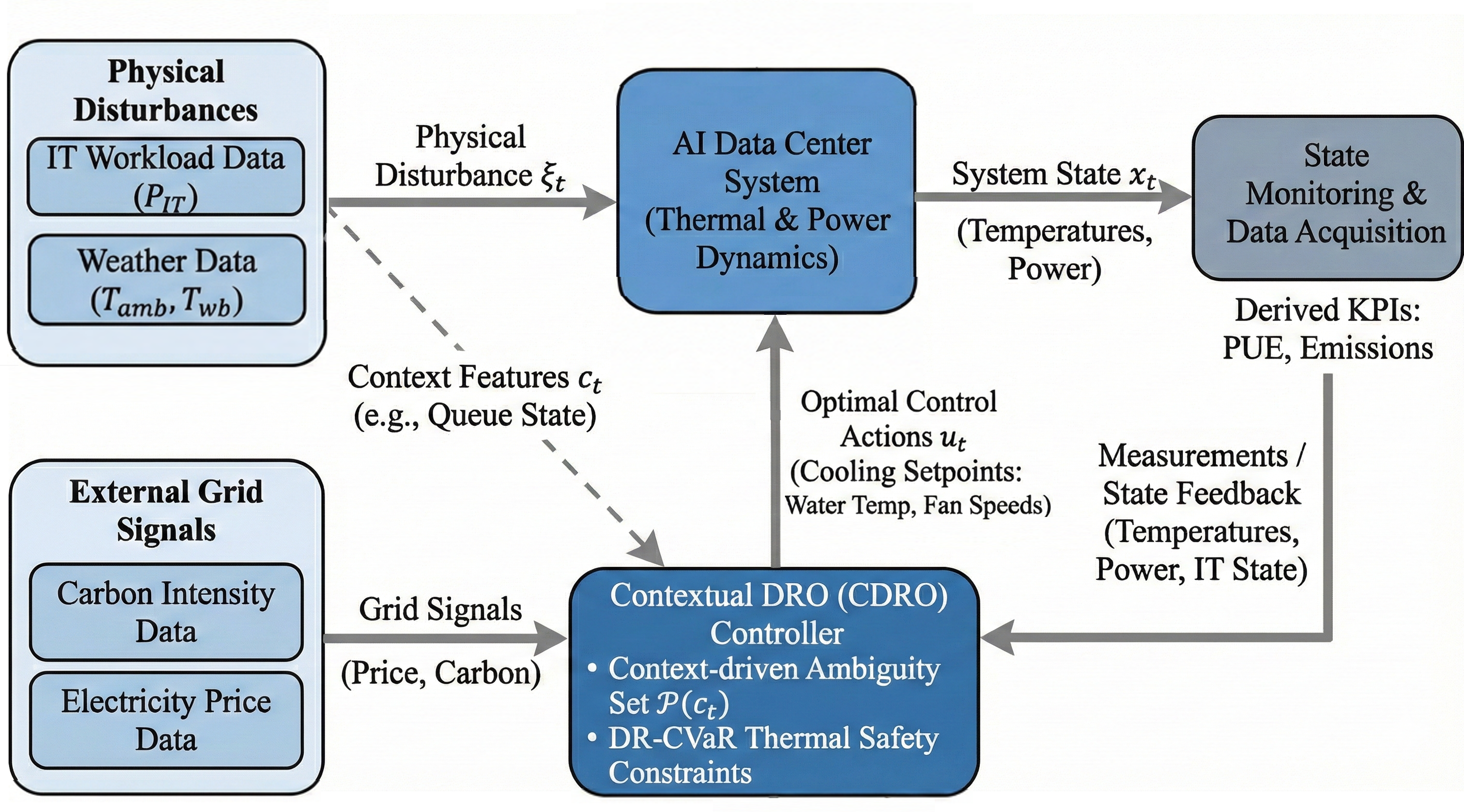}
    \caption{System Framework: The integrated control loop from external grid and ambient temperature inputs to internal data hall and cooling plant dynamics.}
    \label{fig:1}
    \vspace{-5mm}
\end{figure}

\vspace{-5mm}

\section{Problem Formulation}
\label{sec:problem_formulation}
\vspace{-2mm}
In this section, we build a mathematical framework for a grid-interactive AI data center. We first model the system using control-oriented thermal entities and cooling plant dynamics. Then, we formulate the supervisory scheduling task as a CDRO problem. This formulation combines deterministic physical constraints, a distributionally robust objective, and a DR-CVaR hotspot safety constraint. Table \ref{tab:notation_all_single} summarizes the key mathematical notation.

\vspace{-5mm}
\subsection{System Dynamics and Physical Modeling}
\label{ssec:physical_model}

To ensure real-time computational efficiency, we employ discrete-time lumped-parameter models with a control interval of $\Delta t$ (e.g., 5 minutes) that capture essential non-linear dynamics without detailed computational fluid dynamics simulations. For spatial granularity, the control-oriented thermal dynamics are defined over a set of instrumented thermal entities (e.g., racks or cooling zones) indexed by $z \in \mathcal{L}_z$. In parallel, server-level telemetry channels indexed by $s \in \mathcal{L}_s$ are used for hotspot risk evaluation, with a fixed mapping $g:\mathcal{L}_s \rightarrow \mathcal{L}_z$ assigning each server to its hosting entity.

\subsubsection{Control-Oriented Thermal Entity Dynamics}
The core temperature of a controlled thermal entity (rack/zone) ($T_{\text{core},z}$) is a critical state. We model its dynamics using a first-order Resistor-Capacitor (RC) thermal circuit model \cite{Wang2008_RC_Model_DC}. The continuous-time dynamics are discretized using a forward Euler method, resulting in the following linear state-space representation for each controlled entity $z \in \mathcal{L}_z$:

\vspace{-3mm}
{\small
\begin{equation}
    T_{\text{core},z}[t+1] = A_z T_{\text{core},z}[t] + B_z P_{\text{IT},z}[t] + C_z T_{\text{in},z}[t],
    \label{eq:server_rc}
\end{equation}
}

\noindent where $A_z, B_z, C_z$ are coefficients derived from the thermal capacitance, heat resistance, and $\Delta t$. $P_{\text{IT},z}[t]$ is the aggregated and highly variable power consumed by the IT equipment. $T_{\text{in},z}[t]$ is the inlet temperature at the controlled entity. We specifically define $T_{\text{core},z}[t]$ as the representative or control-oriented hotspot temperature state for thermal entity $z$. The server-level hotspot evaluation is handled separately via telemetry (Section \ref{ssec:cdro_model}).

\subsubsection{Room Air and Cooling Loop Dynamics}
We model the cold aisle as a single, well-mixed air volume. Its average temperature, which serves as the shared inlet temperature $T_{\text{in}}[t]$, is determined by an energy balance between the heat removed by the Computer Room Air Conditioning (CRAC) units and the heat recirculated from the hot aisle, following established lumped-capacitance principles \cite{Tang2006_HeatRecirculation}. The dynamics are given by:

\vspace{-3mm}
{\small
\begin{equation}
    T_{\text{in}}[t+1] = (1-\beta) T_{\text{in}}[t] + \beta T_{\text{sup,mix}}[t] + \eta \sum_{z \in \mathcal{L}_z} P_{\text{IT},z}[t].
\end{equation}
}
\vspace{-3mm}

\noindent Here, $P_{\text{IT},z}[t]$ denotes the aggregate IT power load of the servers assigned to thermal entity $z$, meaning $P_{\text{IT},z}[t] = \sum_{s: g(s)=z} P_{\text{IT},s}[t]$. $T_{\text{sup,mix}}[t]$ is the mixed supply air temperature from all CRACs, and $\beta, \eta$ are mixing and recirculation coefficients. The mixed supply air temperature, $T_{\text{sup,mix}}[t]$, represents the mass-flow-weighted average temperature of the air supplied by all active CRAC units, and is defined as:

{\small
\begin{equation}
    T_{\text{sup,mix}}[t] = \frac{\sum_{i \in \mathcal{I}} m_{\text{air},i}[t] \cdot T_{\text{sup},i}[t]}{\sum_{i \in \mathcal{I}} m_{\text{air},i}[t]}.
\end{equation}
}
The supply air temperature from a CRAC unit $i \in \mathcal{I}$, $T_{\text{sup},i}[t]$, depends on its heat exchanger effectiveness $\epsilon_i$ and the chilled water supply temperature $T_{\text{chw,sup}}[t]$. The heat removed by the CRACs, $Q_{\text{CRAC}}[t]$, must equal the total load on the chilled water plant.

\begin{table}[t]
\centering
\caption{\small Sets, Parameters, and Variables}
\label{tab:notation_all_single}
\begin{tabularx}{\columnwidth}{p{2.8cm} X}
\hline
\textbf{Symbol} & \textbf{Description} \\
\hline
\multicolumn{2}{l}{\textit{Sets and Indices}} \\
$\mathcal{T}, \mathcal{I}$ & Set of discrete time steps and CRAC units \\
$\mathcal{L}_z$ & Set of controlled thermal entities (racks/zones), $z \in \mathcal{L}_z$ \\
$\mathcal{L}_s$ & Set of server-level telemetry entities, $s \in \mathcal{L}_s$ \\
$Z_{\text{hot}}[t]$ & Hotspot-induced controlled entities mapped from server telemetry \\
$\mathcal{L}_{\text{hot}}[t]$ & Critical hotspot servers (top $M_{\text{hot}}$ telemetry channels) \\
$g(s)$ & Mapping from server $s$ to its hosting thermal entity $z$ \\
\hline
\multicolumn{2}{l}{\textit{Model Parameters}} \\
$T_{\text{core,max}}$ & Soft temperature threshold for penalty (°C) \\
$T_{\text{core,crit}}$ & Hard critical accelerator temperature limit (°C) \\
$w_{\text{th}}$ & Penalty cost per degree of thermal violation (\$/°C) \\
$\kappa_{\text{CO}_2}$ & Carbon price coefficient (\$/kgCO$_2$) \\
$\varepsilon$ & DR-CVaR tail probability level (unitless fraction; $\varepsilon = 0.05$ means 5\%) \\
\hline
\multicolumn{2}{l}{\textit{Variables and Functions}} \\
$\mathbf{u}[t]$ & Aggregated cooling control actions vector \\
$T_{\text{hot}}^{\text{proxy}}[t]$ & Zone-level proxy state representing the hotspot peak temperature \\
\hline
\multicolumn{2}{l}{\textit{Uncertain Parameters and Context}} \\
$\bm{\xi}[t]$ & Uncertain parameter realizations \\
$\hat{\bm{\xi}}[t:t+H]$ & Rolling forecasts generated using $\mathcal{I}_t$ (non-anticipative) \\
$\mathbf{c}[t]$ & Real-time context vector \\
\hline
\end{tabularx}
\vspace{-4mm}
\end{table}

\subsubsection{CRAC Unit Thermal-Fluid Dynamics}
The performance of each CRAC unit $i \in \mathcal{I}$ is described by the following relationships. The mass flow rate of air, $m_{\text{air},i}[t]$, is directly proportional to the fan speed decision $s_i[t]$:

{\small
\begin{equation}
    m_{\text{air},i}[t] = m_{\text{air},i}^{\text{rated}} \cdot s_i[t],
\end{equation}
}
where $m_{\text{air},i}^{\text{rated}}$ is the fan's rated mass flow rate at full speed.

The supply air temperature, $T_{\text{sup},i}[t]$, is the temperature of the cold air leaving the CRAC unit. It is determined by the entering return air temperature, $T_{\text{ret},i}[t]$, and the chilled water supply temperature, $T_{\text{chw,sup}}[t]$, via the standard $\epsilon$-NTU heat exchanger model \cite{KaysLondon_CompactHeatExchangers}:

{\small
\begin{equation}
    T_{\text{sup},i}[t] = T_{\text{ret},i}[t] - \epsilon_i \cdot (T_{\text{ret},i}[t] - T_{\text{chw,sup}}[t]),
\end{equation}
}
where $\epsilon_i$ is the heat exchanger effectiveness parameter (a constant between 0 and 1). The chilled water temperature $T_{\text{chw,sup}}[t]$ is assumed to be equal to our decision variable, the setpoint $T_{\text{chw,set}}[t]$.

The return air temperature, $T_{\text{ret},i}[t]$, is the temperature of the hot air from the hot aisle entering the CRAC unit. In our lumped-parameter model, we approximate this as the cold aisle temperature plus the temperature gain from the total IT load:

{\small
\begin{equation}
    T_{\text{ret,i}}[t] \approx T_{\text{in}}[t] + \frac{\eta_{\text{capture}} \sum_{z \in \mathcal{L}_z} P_{\text{IT},z}[t]}{\sum_{i \in \mathcal{I}} m_{\text{air},i}[t] \cdot c_{p,air}},
\end{equation}
}
where $\eta_{\text{capture}}$ is a heat capture effectiveness parameter that accounts for effects like air bypass and recirculation and $c_{p,air}$ is the specific heat capacity of air (a constant, approx. 1,005 J/kg·K).

\subsubsection{Cooling Plant Energy Model}
The cooling plant consists of CRAC fans, chillers, cooling towers, and pumps. Their power consumption is highly non-linear.
\begin{itemize}
    \item{\em CRAC Fans:} The power of the fan in CRAC unit $i$, $P_{\text{fan},i}[t]$, follows the fan laws, scaling cubically with its normalized speed $s_i[t]$:

    {\small
    \begin{equation}
        P_{\text{fan},i}[t] = P_{\text{fan},i}^{\text{rated}} \cdot (s_i[t])^3.
    \end{equation}
}
    \item{\em  Chillers:} The chiller power, $P_{\text{chiller}}[t]$, is the ratio of the thermal load it serves, $Q_{\text{CRAC}}[t]$, to its Coefficient of Performance (COP). The COP is a strong function of the chilled water supply setpoint $T_{\text{chw,set}}[t]$ and the temperature of the condenser water $T_{\text{cond,in}}[t]$ entering from the cooling tower. We use a standard bi-quadratic DOE-2 model \cite{Hydeman2002_ChillerModel}:
    
    {\small
    \begin{equation}
        \text{COP}[t] = \frac{Q_{\text{CRAC}}[t]}{P_{\text{chiller}}[t]} = f_{\text{COP}}(T_{\text{chw,set}}[t], T_{\text{cond,in}}[t]).
        \label{eq:cop}
    \end{equation}
    }
    \item{\em Cooling Towers:} The tower rejects the data center's waste heat ($Q_{\text{CRAC}}[t] + P_{\text{chiller}}[t]$) to the atmosphere. Its performance is determined by how closely its output water temperature, $T_{\text{cond,out}}[t]$, can approach the ambient wet-bulb temperature $T_{\text{wb}}[t]$. The tower fan power, $P_{\text{tower}}[t]$, also follows a cubic scaling law. The energy balance implies:

    \vspace{-3mm}
    {\small
    \begin{equation}
    \label{eq:tower}
        T_{\text{cond,in}}[t] = T_{\text{cond,out}}[t] \approx T_{\text{wb}}(\xi[t]) + f_{\text{approach}}(s_{\text{tower}}[t]).
    \end{equation}
    }
    
    \item{\em Pumps:} The power for chilled water and condenser water pumps, $P_{\text{pump}}[t]$, is modeled as a polynomial function of the required water flow rate, which in turn depends on the total thermal load.
\end{itemize}
\vspace{-4mm}
\subsection{Contextual Distributionally Robust Optimization Model}
\label{ssec:cdro_model}

Based on the physical dynamics, we formulate the supervisory cooling control problem. At each decision epoch $t$, the objective is to determine an optimal control sequence $\mathbf{u}$ over a finite horizon $\mathcal{T}$, minimizing the worst-case expected operational cost conditioned on the real-time context $\mathbf{c}[t]$ and uncertain parameters $\bm{\xi}[t]$. Specifically, the control actions vector $\mathbf{u}[t]$ comprises the normalized CRAC fan speeds $s_i[t]$, the cooling tower fan speed $s_{\text{tower}}[t]$, and the chilled water supply setpoint $T_{\text{chw,set}}[t]$. The uncertain parameter vector $\bm{\xi}[t]$ captures the highly variable aggregated IT load $P_{\text{IT},z}[t]$, ambient weather conditions $T_{\text{amb}}[t]$ and $T_{\text{wb}}[t]$, real-time electricity prices $C_{\text{elec}}[t]$, and grid carbon intensity $CI_{\text{grid}}[t]$.

\subsubsection{Contextual Ambiguity Set Construction}
To rigorously define the online decision boundaries and prevent information leakage, we first specify the available information set $\mathcal{I}_t$ at decision epoch $t$:

\begin{definition}[Online Information Set]
Let $\mathcal{I}_t$ denote the filtration containing all information strictly available up to time $t$:

\vspace{-3mm}
{\small
\begin{equation}
    \mathcal{I}_t := \{ \mathbf{T}_{\text{state}}[0:t], \mathbf{u}[0:t-1], \bm{\xi}_{\text{obs}}[0:t], \hat{\bm{\xi}}[t:t+H] \},
\end{equation}
}
where $\mathbf{T}_{\text{state}}[0:t]$ denotes the historical system states (e.g., measured core and return temperatures), $\bm{\xi}_{\text{obs}}[0:t]$ represents historical realizations of uncertain parameters, and $\hat{\bm{\xi}}[t:t+H]$ represents rolling forecasts generated by a predictive model trained strictly on data prior to $t$.

\end{definition}
Importantly, to manage the specific variability of AI workloads and grid signals, we explicitly define the context vector $\mathbf{c}[t]$ as a causal mapping of this information set, $\mathbf{c}[t] = \phi(\mathcal{I}_t)$. It combines internal computational features and external grid features:

\vspace{-3mm}
{\small
\begin{equation}
    \mathbf{c}[t] = [\mathbf{f}_{\text{queue}}[t], \mathbf{f}_{\text{grid}}[t]]^\top,
\end{equation}
}
where $\mathbf{f}_{\text{queue}}$ represents AI job scheduler metrics (e.g., job size, queue length), and $\mathbf{f}_{\text{grid}}$ includes forecasts of real-time carbon intensity and electricity prices. Specifically, $\mathbf{f}_{\text{grid}}$ incorporates publicly available Day-Ahead (DA) prices and short-term volatility nowcasts. For instance, the DA-RT spread feature utilizes the committed DA price and the \textit{current observed} Real-Time price, ensuring no future ground-truth values are leaked into the decision process. We construct a Wasserstein ambiguity set $\mathcal{P}(\mathbf{c}[t])$ based on the conditional distribution of prediction errors derived from $\mathcal{I}_t$.

\subsubsection{Objective Function}
The objective is to minimize the total operational cost in USD. To ensure dimensional consistency, we explicitly monetize carbon emissions and thermal risks using conversion factors. The objective is expressed as:

\vspace{-2mm}
{\small
\begin{equation}
    \mathcal{J}(\mathbf{u}) =  \sup_{P_\xi \in \mathcal{P}(\mathbf{c})} \mathbb{E}_{P_\xi} \left[\sum_{t \in \mathcal{T}} \text{Cost}[t]\right],
    \label{eq:obj}
\end{equation}
}
where the single-stage cost, $\text{Cost}[t]$, is defined in dollars (\$):

\vspace{-2mm}
{\small
\begin{equation}
\label{eq:objective_full}
\begin{aligned}
    \text{Cost}[t] = & \underbrace{C_{\text{elec}}[t] \cdot E_{\text{total}}[t]}_{\text{Energy Cost}} + \underbrace{\kappa_{\text{CO}_2} \cdot (CI_{\text{grid}}[t] \cdot E_{\text{total}}[t])}_{\text{Carbon Cost}} \\
    & + \underbrace{w_{\text{th}} \sum_{z \in \mathcal{L}_z} \max\{0, T_{\text{core},z}[t] - T_{\text{core,max}}\}}_{\text{Thermal Risk Penalty}}.
\end{aligned}
\end{equation}
}
Here, $E_{\text{total}}[t] = P_{\text{total}}[t] \cdot \Delta t$ represents the energy consumption over interval $\Delta t$, where $P_{\text{total}}[t]$ sums the power of chillers, towers, fans, and IT equipment. The parameter $\kappa_{\text{CO}_2}$ is the carbon price (e.g., \$/kgCO$_2$), converting the carbon footprint into monetary terms. Similarly, $w_{\text{th}}$ is the thermal penalty weight defined in \$/$^{\circ}$C, representing the economic risk associated with exceeding the safe operating temperature $T_{\text{core,max}}$.

\subsubsection{System Constraints}
To formulate a clear control problem, we organize the system constraints into three categories: (i) deterministic physics and operational constraints (state evolution, actuator bounds, energy balances); (ii) a robust objective term minimizing the worst-case expected cost; and (iii) a distributionally robust hotspot risk constraint (DR-CVaR on the peak temperature proxy, detailed in Section III-B).
\begin{itemize}

\item{\em Thermal Entity Dynamics:} The core temperature evolution must follow the RC model, driven by the uncertain aggregated IT load $P_{\text{IT},z}[t]$:

\vspace{-3mm}
{\small
\begin{equation}
    T_{\text{core},z}[t+1] = A_z T_{\text{core},z}[t] + B_z P_{\text{IT},z}[t] + C_z T_{\text{in},z}[t], \; \forall z \in \mathcal{L}_z
    \label{eq:dy}
\end{equation}
}

\item{\em Cooling System Energy Balance:} The heat removed by the CRAC units must equal the thermal load serviced by the chiller plant. The air-side heat removal is:

\vspace{-2mm}
{\small
\begin{equation}
    Q_{\text{air-side}}[t] = \sum_{i \in \mathcal{I}} m_{\text{air},i}[t] c_{p,\text{air}} (T_{\text{ret},i}[t] - T_{\text{sup},i}[t]).
\end{equation}
}
The thermal load met by the chillers is:

\vspace{-2mm}
{\small
\begin{equation}
    Q_{\text{refrigeration-side}}[t] = P_{\text{chiller}}[t] \cdot \text{COP}[t].
\end{equation}
}
This forms the key coupling constraint:

\vspace{-2mm}
{\small
\begin{equation}
    Q_{\text{air-side}}[t] = Q_{\text{refrigeration-side}}[t].
    \label{eq:energy_balance}
\end{equation}
}

\item{\em Operational and Safety Limits:} We enforce physical and operational limits within the model. The hard temperature limit in (\ref{eq:safety_limit}) acts as a control-oriented proxy constraint.

\vspace{-2mm}
{\small
\begin{equation}
    T_{\text{core},z}[t] \le T_{\text{core,crit}}, \quad \forall z \in \mathcal{L}_z.
    \label{eq:safety_limit}
\end{equation}
}

To bridge server-level risks with the zone-level optimization, we first identify the critical servers using real-time telemetry:

\vspace{-2mm}
{\small
\begin{equation}
    \mathcal{L}_{\text{hot}}[t] = \text{Top-}M_{\text{hot}} \{ T_{\text{tele},s}[t] \}_{s \in \mathcal{L}_s},
\end{equation}
}
where $M_hot$ is the number of critical servers and $T_{\text{tele},s}[t]$ denotes the real-time temperature measurement acquired directly from the telemetry channel of an individual server $s$ at time step $t$.

Next, we map these servers to their hosting zones:

\vspace{-2mm}
{\small
\begin{equation}
    Z_{\text{hot}}[t] = \{g(s) : s \in \mathcal{L}_{\text{hot}}[t]\}.
\end{equation}
}
Finally, we define a hotspot proxy variable for the optimization:

\vspace{-2mm}
{\small
\begin{equation}
    T_{\text{hot}}^{\text{proxy}}[t] = \max_{z \in Z_{\text{hot}}[t]} (T_{\text{core},z}[t] + \Delta_z),
\end{equation}
}
where $\Delta_z$ is a calibrated zone-to-server hotspot margin from historical statistics (e.g., the 95th percentile of the temperature difference $T_{\text{tele},s} - T_{\text{core},z}$ per zone).

\item{\em Decision Variable Bounds:} All control actions must remain within their physical operating limits.

\vspace{-2mm}
{\small
\begin{equation}
    0 \le s_i[t] \le 1, \quad \forall i \in \mathcal{I},
\end{equation}
\begin{equation}
    T_{\text{chw,set,min}} \le T_{\text{chw,set}}[t] \le T_{\text{chw,set,max}} .
    \label{eq:bounds}
\end{equation}
 }    
\end{itemize}
Finally, we summarize the CDRO model. The complete problem is defined as:

\vspace{-2mm}
{\small
\begin{equation*}
\begin{aligned}
    \text{(CDRO)}: \quad \min_{\mathbf{u}} \quad & \text{Eq. (\ref{eq:obj})} \\
    \text{s.t.} \quad & \text{Physical Dynamics: (\ref{eq:server_rc}) -- (\ref{eq:tower})} \\
    & \text{Operational Constraints: (\ref{eq:dy}) -- (\ref{eq:bounds})}
\end{aligned}
\end{equation*}
}
This formulation integrates the non-linear physical realities of AI data centers directly into the control task. It provides a formulation that combines deterministic plant constraints with distributionally robust hotspot risk control.

\vspace{-2mm}

\section{Methodology}
\label{sec:methodology}

In this section, we propose a two-stage framework integrating statistical learning with non-convex optimization to address contextual uncertainty and real-time constraints. The offline phase constructs high-fidelity uncertainty models and calibrates safety parameters. This setup enables the online phase to execute a fast and decomposed optimization algorithm for robust control.
\vspace{-2mm}
\subsection{Context-Aware Ambiguity Set Construction}
\label{ssec:ambiguity_set}

The core of our framework is the ambiguity set $\mathcal{P}(\mathbf{c}[t])$. It captures the uncertainty of the parameters based on the real-time context $\mathbf{c}[t]$. AI workloads and grid signals have complex patterns that change over time. To handle this, we use a residual-based strategy. We use a forecaster $f_{\text{forecast}}$ (using XGBoost) to find the deterministic parts of exogenous signals (workload, weather, and/or grid signals). We model the realization $\bm{\xi}[t]$ as a point forecast plus a residual: $\bm{\xi}[t] = \hat{\bm{\xi}}[t] + \bm{e}[t]$. The ambiguity set is defined over the distribution of these residuals using the Wasserstein metric.

\subsubsection{Context Features and k-NN Retrieval}
The context vector $\mathbf{c}[t]$ is important for capturing the distribution of residuals. We define $\mathbf{c}[t]$ to include real-time values and indicators of volatility. All features come from the online information set $\mathcal{I}_t$. For example, we use the variance of job duration estimates. For grid features, we use the spread between the Day-Ahead price and the last observed Real-Time price:

\vspace{-2mm}
{\small
\begin{equation}
    \Delta_{\text{spread}}[t] = C_{\text{DA}}[t] - C_{\text{RT}}[t-1],
\end{equation}
}
where $C_{\text{DA}}[t]$ and $C_{\text{RT}}[t]$ denote the DA and RT electricity prices, respectively. We use a k-nearest neighbor (k-NN) approach in the feature space. We retrieve a set of historical residuals $\{e_j\}_{j=1}^k$ from similar contexts to build an empirical distribution. We only use residuals from the training and validation sets to prevent using future data.

\subsubsection{Radius Calibration via Safety-Driven Backtesting}
A major challenge is choosing the Wasserstein radius $\rho(\mathbf{c}[t])$. Standard theory often fails because AI residuals have heavy tails. We use a data-driven approach instead. We define the ambiguity set as:

\vspace{-2mm}
{\small
\begin{equation}
    \mathcal{P}_{\rho}(\mathbf{c}[t]) := \left\{ \mathbb{P} \in \mathcal{M}(\Xi) \mid W_1(\mathbb{P}, \hat{\mathbb{P}}_{k}(\mathbf{c}[t])) \le \rho(\mathbf{c}[t]) \right\}.
\end{equation}
}
We group the context space into a set of $R$ volatility regimes $\{C_r\}_{r=1}^R$. The radius function is piecewise constant:

\vspace{-2mm}
{\small
\begin{equation}
    \rho(\mathbf{c}) = \rho_r^*, \quad \text{if } \mathbf{c} \in C_r, \; r \in \{1,\dots,R\}.
\end{equation}
}
For each regime, the optimal radius $\rho_r^*$ is the minimum value that satisfies the safety requirement on a validation set $\mathcal{D}_{\text{val}}$:

\vspace{-2mm}
{\small
\begin{equation}
    \rho_r^* := \min \left\{ \rho \ge 0 \mid \text{CVaR}_{\varepsilon}(V_t(\rho)) \le \text{Target}, \forall t \in \mathcal{D}_{\text{val}} \cap C_r \right\},
\end{equation}
}
where $V_t(\rho)$ denotes the empirical thermal violation magnitude under radius $\rho$, and $\text{Target}$ is the predefined safety threshold. Offline, we train the forecaster and k-NN index, compute residuals, partition the context into volatility regimes, and calibrate the minimum $\rho_r^*$ satisfying the target on validation data. The resulting context-calibrated radius, denoted $\rho(\mathbf{c}[t])$, is applied uniformly across both the robust objective and DR-CVaR safety constraints.

\begin{remark}[Value of Contextual Information]
By grouping residuals into regimes, we group similar data points. The uncertainty in each regime is lower than the total uncertainty. This allows the controller to use a smaller radius. A smaller radius reduces cooling costs while keeping the same safety level.
\end{remark}

\subsection{Tractable Reformulation via Nested Decomposition}
\label{ssec:reformulation}

The primal problem formulated in Section III involves an inf–sup optimization over an infinite-dimensional space of probability measures, coupled with non-convex physical constraints. This is computationally difficult to solve for real-time control. We employ a nested solution framework. We use an inner dualization to handle the stochastic uncertainty and an outer decomposition to handle the physical complexity.

\subsubsection{Step 1: Strong Duality Reformulation}
We first convert the stochastic worst-case expectation into a deterministic convex problem. The primal problem seeks to minimize the expected cost under the worst-case distribution within the Wasserstein ball. To address this infinite-dimensional challenge, we invoke the strong duality theory for Wasserstein DRO established by Esfahani and Kuhn \cite{EsfahaniKuhn2018}. By adapting their general theoretical result to our specific objective function, which is convex in decision variables and affine in uncertain parameters (i.e., $\ell(\mathbf{x},\bm{\xi}) = \mathbf{a}(\mathbf{x})^\top \bm{\xi} + d(\mathbf{x})$), we derive the following tractable counterpart.

\begin{proposition}[Tractable Counterpart for Objective]
\label{prop:dual_reformulation}
For a fixed decision vector $\mathbf{x}$ (representing cooling actions), the worst-case expected cost is equal to the optimal value of the following deterministic minimization problem:

\vspace{-2mm}
{\small
\begin{equation}
\label{eq:dual_form}
\begin{aligned}
\min_{\lambda, s_j} \quad & \lambda \rho(\mathbf{c}[t]) + \mathbf{a}(\mathbf{x})^\top \hat{\bm{\xi}} + d(\mathbf{x}) + \frac{1}{k} \sum_{j=1}^k s_j \\
\text{s.t.} \quad & s_j \ge \mathbf{a}(\mathbf{x})^\top \mathbf{e}_j, \quad \forall j \in \{1,\dots,k\}, \\
& \|\mathbf{a}(\mathbf{x})\|_* \leq \lambda.
\end{aligned}
\end{equation}
}
Since our cost function is affine in the uncertainty $\bm{\xi}$, the dual norm constraint simplifies to $\|\mathbf{a}(\mathbf{x})\|_* \le \lambda$.
\end{proposition}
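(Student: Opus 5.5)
Fix $\mathbf{x}$ and write $\ell(\bm{\xi}) = \mathbf{a}^\top\bm{\xi} + d$ with $\mathbf{a}=\mathbf{a}(\mathbf{x})$ and $d=d(\mathbf{x})$. Using $\bm{\xi}=\hat{\bm{\xi}}+\mathbf{e}$, the loss in terms of the residual is $\mathbf{a}^\top\hat{\bm{\xi}} + d + \mathbf{a}^\top\mathbf{e}$. The ambiguity set is a $W_1$ ball of radius $\rho=\rho(\mathbf{c}[t])$ around the empirical measure $\hat{\mathbb{P}}_k=\frac1k\sum_j\delta_{\mathbf{e}_j}$, so the constant $\mathbf{a}^\top\hat{\bm{\xi}}+d$ factors out of the supremum. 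We therefore only need to evaluate $\sup_{W_1(\mathbb{P},\hat{\mathbb{P}}_k)\le\rho}\mathbb{E}_{\mathbb{P}}[\mathbf{a}^\top\mathbf{e}]$, and we take the support $\Xi$ to be the whole space, as the statement implicitly does.

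\textbf{Duality step.} We invoke the strong duality theorem of Esfahani and Kuhn \cite{EsfahaniKuhn2018}. For a loss that is a pointwise maximum of finitely many affine or concave pieces (here a single affine piece), with transport cost given by the norm $\|\cdot\|$, it gives
\begin{equation*}
\sup_{\mathbb{P}}\mathbb{E}_{\mathbb{P}}[\mathbf{a}^\top\mathbf{e}] = \inf_{\lambda\ge0}\ \lambda\rho + \frac1k\sum_{j=1}^k \sup_{\mathbf{e}}\bigl(\mathbf{a}^\top\mathbf{e} - \lambda\|\mathbf{e}-\mathbf{e}_j\|\bigr).
\end{equation*}
The inner supremum is evaluated by substituting $\mathbf{v}=\mathbf{e}-\mathbf{e}_j$. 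This gives $\mathbf{a}^\top\mathbf{e}_j + \sup_{\mathbf{v}}(\mathbf{a}^\top\mathbf{v}-\lambda\|\mathbf{v}\|)$, and the last supremum equals $0$ if $\|\mathbf{a}\|_*\le\lambda$ and $+\infty$ otherwise, by the definition of the dual norm. The infimum is therefore restricted to $\lambda\ge\|\mathbf{a}\|_*$, which makes the constraint $\lambda\ge0$ redundant. Introducing epigraph variables $s_j\ge\mathbf{a}^\top\mathbf{e}_j$ and adding back the constant $\mathbf{a}^\top\hat{\bm{\xi}}+d$ gives exactly \eqref{eq:dual_form}.

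\textbf{Self-contained alternative.} The main obstacle is verifying the hypotheses for strong duality: an unbounded support and a loss with finite growth rate. If I want to avoid relying on the general theorem, I would prove the identity directly.
\begin{itemize}
\item \emph{Upper bound (weak duality).} For any coupling $\pi$ of $\mathbb{P}$ and $\hat{\mathbb{P}}_k$ with cost at most $\rho$,
\begin{equation*}
\mathbb{E}_{\mathbb{P}}[\mathbf{a}^\top\mathbf{e}] - \frac1k\sum_j\mathbf{a}^\top\mathbf{e}_j = \mathbb{E}_\pi[\mathbf{a}^\top(\mathbf{e}-\mathbf{e}')] \le \|\mathbf{a}\|_*\,\mathbb{E}_\pi\|\mathbf{e}-\mathbf{e}'\| \le \|\mathbf{a}\|_*\rho.
\end{equation*}
The inequality uses H\"older's inequality for the norm pair $\|\cdot\|$, $\|\cdot\|_*$.
\item \emph{Attainment.} Take $\mathbf{v}^*$ with $\|\mathbf{v}^*\|=1$ and $\mathbf{a}^\top\mathbf{v}^*=\|\mathbf{a}\|_*$, which exists in finite dimension. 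Shifting every atom, $\mathbb{P}^*=\frac1k\sum_j\delta_{\mathbf{e}_j+\rho\mathbf{v}^*}$, stays in the ball and achieves the bound.
\end{itemize}
Together these show the worst case equals $\frac1k\sum_j\mathbf{a}^\top\mathbf{e}_j+\rho\|\mathbf{a}\|_*$. This is precisely the optimal value of \eqref{eq:dual_form}, since at the optimum $\lambda=\|\mathbf{a}\|_*$ and $s_j=\mathbf{a}^\top\mathbf{e}_j$. This route confirms that the dual-norm constraint is the only nontrivial constraint, as remarked after the proposition.

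The one caveat is a bounded support $\Xi$. In that case the shifted atoms may leave $\Xi$, and the equality degrades to an upper bound. I would state the result under the assumption $\Xi=\mathbb{R}^n$, or treat the bounded case as a conservative reformulation.
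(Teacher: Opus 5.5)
Your duality step is essentially the paper's own proof (dualize the Wasserstein constraint, substitute $\mathbf{v}=\mathbf{e}-\mathbf{e}_j$, use the dual-norm bound to force $\lambda\ge\|\mathbf{a}(\mathbf{x})\|_*$, then introduce epigraph variables $s_j$), and it is correct; working with a generic norm pair rather than the paper's $\|\cdot\|_1/\|\cdot\|_\infty$ also matches the statement more faithfully. Your self-contained check is a valid, more elementary confirmation that avoids the general duality machinery the paper invokes: the H\"older upper bound $\frac{1}{k}\sum_j\mathbf{a}^\top\mathbf{e}_j+\rho\|\mathbf{a}\|_*$ is attained by $\frac{1}{k}\sum_j\delta_{\mathbf{e}_j+\rho\mathbf{v}^*}$, and your caveat that exact equality needs $\Xi=\mathbb{R}^n$ correctly brings out an assumption the paper uses silently when it lets $\mathbf{e}-\mathbf{e}_j$ range over all of space.
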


\begin{proof}
This proof adapts the standard duality derivation in \cite{EsfahaniKuhn2018}. We start with the primal problem:

\vspace{-2mm}
{\small
\begin{equation}
    J_{wc}(\mathbf{x}) = \sup_{P \in \mathcal{P}_\rho(\mathbf{c}[t])} \int_{\Xi} \left( \mathbf{a}(\mathbf{x})^\top (\hat{\bm{\xi}} + \mathbf{e}) + d(\mathbf{x}) \right) P(d\mathbf{e}).
\end{equation}
}
We introduce a Lagrange multiplier $\lambda \ge 0$ for the Wasserstein constraint $W_1(P, \hat{P}_N) \le \rho(\mathbf{c}[t])$. The Lagrangian is defined as:

\vspace{-2mm}
{\small
\begin{equation}
    L(P, \lambda) = \mathbb{E}_P[\ell(\mathbf{x}, \bm{\xi})] + \lambda (\rho(\mathbf{c}[t]) - W_1(P, \hat{P}_N)).
\end{equation}
}
Since the ambiguity set contains the empirical distribution (Slater's condition holds), strong duality applies, allowing us to swap the supremum and infimum:

\vspace{-2mm}
{\small
\begin{equation}
    J_{wc}(\mathbf{x}) = \inf_{\lambda \ge 0} \left\{ \lambda \rho(\mathbf{c}[t]) + \sup_{P} \left( \mathbb{E}_P[\ell(\mathbf{x}, \bm{\xi})] - \lambda W_1(P, \hat{P}_N) \right) \right\}.
\end{equation}
}
By the Kantorovich-Rubinstein duality theorem, the inner supremum over distributions $P$ simplifies to an expectation over the empirical distribution $\hat{P}_N = \frac{1}{k} \sum_{j=1}^k \delta_{\mathbf{e}_j}$. This transforms the expectation into a finite average of specific suprema:

\vspace{-2mm}
{\small
\begin{equation}
    \frac{1}{k} \sum_{j=1}^k \sup_{\mathbf{e} \in \Xi} \left\{ \mathbf{a}(\mathbf{x})^\top (\hat{\bm{\xi}} + \mathbf{e}) + d(\mathbf{x}) - \lambda \|\mathbf{e} - \mathbf{e}_j\|_1 \right\}.
\end{equation}
}
For each historical sample $\mathbf{e}_j$, let $\mathbf{u} = \mathbf{e} - \mathbf{e}_j$. The inner maximization becomes $\mathbf{a}(\mathbf{x})^\top \mathbf{e}_j + \sup_{\mathbf{u}} \{ \mathbf{a}(\mathbf{x})^\top \mathbf{u} - \lambda \|\mathbf{u}\|_1 \}$. By the definition of the dual norm, $\mathbf{a}(\mathbf{x})^\top \mathbf{u} \le \|\mathbf{a}(\mathbf{x})\|_\infty \|\mathbf{u}\|_1$. To prevent the supremum from diverging to $+\infty$, we must enforce the dual constraint $\|\mathbf{a}(\mathbf{x})\|_\infty \le \lambda$. Under this condition, the supremum is bounded at $0$. Introducing auxiliary epigraph variables $s_j \ge \mathbf{a}(\mathbf{x})^\top \mathbf{e}_j$ to bound the worst-case realization for each sample directly yields the finite-dimensional linear program in \eqref{eq:dual_form}.
\end{proof}

Proposition \ref{prop:dual_reformulation} yields a tractable robust objective. 
To enforce thermal safety, we next reformulate the risk constraint under dual data granularity: the dynamics are modeled at the zone level, while hotspot risk is observed at the server level. 
We therefore map server telemetry to a zone-level proxy state.

\begin{definition}[Hotspot risk channel under dual granularity]
Given server telemetry at time $t$, let the hotspot server set be $\mathcal{L}_{\text{hot}}[t]\subseteq\mathcal{L}_s$ and define the mapped hotspot-zone set
$Z_{\text{hot}}[t]=\{g(s): s\in\mathcal{L}_{\text{hot}}[t]\}$.
We impose the DR-CVaR constraint on the proxy temperature $T_{\text{hot}}^{\text{proxy}}[t]$ constructed from $Z_{\text{hot}}[t]$.
\end{definition}

We apply the DR-CVaR constraint to $T_{\text{hot}}^{\text{proxy}}[t]$ to control worst-case hotspot risk.
Since $T_{\text{hot}}^{\text{proxy}}[t]$ involves a max operator over mapped zones, we adopt a local convex approximation of its response with respect to the uncertainty residuals, which enables a standard dual reformulation and yields a tractable constraint.

\begin{proposition}[Tractable DR-CVaR Safety Constraint]
\label{prop:dr_cvar_tractable}
The distributionally robust safety constraint $\sup_{P \in \mathcal{P}_\rho(\mathbf{c}[t])} \text{CVaR}_{\varepsilon}(T_{\text{hot}}^{\text{proxy}} - T_{\text{core,crit}}) \le 0$ admits the following tractable conservative counterpart using auxiliary variables $\eta \in \mathbb{R}$, $\lambda_T \ge 0$, and slack variables $u_j \ge 0$ for each sample $j \in \{1,\dots,k\}$:

\vspace{-2mm}
{\small
\begin{subequations}
\label{eq:dr_cvar_tractable}
\begin{align}
    \eta + \frac{1}{\varepsilon} \left( \lambda_T \rho(\mathbf{c}[t]) + \frac{1}{k} \sum_{j=1}^k u_j \right) &\le 0, \\
    u_j \ge (T_{\text{hot}}^{\text{proxy}}(\hat{\xi} + e_j) - T_{\text{core,crit}}) - \eta, \quad &\forall j \in \{1,\dots,k\} \label{eq:cvar_hinge1},\\
    u_j \ge 0, \quad \forall j \in \{1,\dots,k\}, \\
    \|\nabla_e T_{\text{hot}}^{\text{proxy}}\|_*(\mathbf{c}[t]) \le \lambda_T \label{eq:cvar_dual_norm}.
\end{align}
\end{subequations}
}
where $T_{\text{hot}}^{\text{proxy}}(\hat{\xi} + e_j)$ is the peak hotspot proxy temperature evaluated under the specific historical residual scenario $e_j$.
\end{proposition}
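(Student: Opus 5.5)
The plan is to prove that the constraint system \eqref{eq:dr_cvar_tractable} implies the distributionally robust CVaR constraint. It is an inner approximation, which is why the statement says ``conservative''. Write $L(\mathbf{e}) := T_{\text{hot}}^{\text{proxy}}(\hat{\xi}+\mathbf{e}) - T_{\text{core,crit}}$. Start from the Rockafellar--Uryasev representation
\[
\text{CVaR}_{\varepsilon}(L) = \inf_{\eta\in\mathbb{R}} \Big\{ \eta + \tfrac{1}{\varepsilon}\mathbb{E}_P[(L-\eta)^+] \Big\}.
\]
For every $P$ this gives $\text{CVaR}_{\varepsilon}(L) \le \eta + \tfrac{1}{\varepsilon}\mathbb{E}_P[(L-\eta)^+]$ for any fixed $\eta$. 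Taking the supremum over $\mathcal{P}_\rho(\mathbf{c}[t])$ on both sides, the interchange $\sup_P\inf_\eta \le \inf_\eta\sup_P$ goes in the direction we need, so no minimax theorem is required. It therefore suffices to bound $\sup_{P\in\mathcal{P}_\rho}\mathbb{E}_P[(L-\eta)^+]$ for the fixed $\eta$ appearing in the constraints.

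Next, bound this worst-case expectation with the same Wasserstein duality used in Proposition~\ref{prop:dual_reformulation}. Here only weak duality is needed, which holds without further assumptions:
\[
\sup_{P}\mathbb{E}_P[h] \le \lambda_T\rho + \tfrac{1}{k}\sum_j \sup_{\mathbf{e}}\{h(\mathbf{e}) - \lambda_T\|\mathbf{e}-\mathbf{e}_j\|\},
\qquad h = (L-\eta)^+,
\]
valid for every $\lambda_T\ge 0$. The key step is to show that the inner supremum is at most $u_j$ under \eqref{eq:cvar_hinge1}--\eqref{eq:cvar_dual_norm}. For this, invoke the local convex (affine) approximation adopted just before the proposition. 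The response of the proxy to the residual is modelled as $L(\mathbf{e}) \le L(\mathbf{e}_j) + \nabla_e T_{\text{hot}}^{\text{proxy}}{}^\top(\mathbf{e}-\mathbf{e}_j)$, or more generally as Lipschitz with constant $\|\nabla_e T_{\text{hot}}^{\text{proxy}}\|_*$. The hinge $x\mapsto(x-\eta)^+$ is 1-Lipschitz and monotone, so $h$ inherits the same Lipschitz constant. Then
\[
h(\mathbf{e}) - \lambda_T\|\mathbf{e}-\mathbf{e}_j\| \le h(\mathbf{e}_j) + (\|\nabla_e T^{\text{proxy}}_{\text{hot}}\|_* - \lambda_T)\|\mathbf{e}-\mathbf{e}_j\| \le h(\mathbf{e}_j) \le u_j,
\]
where the middle inequality uses \eqref{eq:cvar_dual_norm}, and the last uses \eqref{eq:cvar_hinge1} together with $u_j\ge 0$.

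Combining the pieces gives
\[
\sup_P \text{CVaR}_\varepsilon(L) \le \eta + \tfrac{1}{\varepsilon}\Big(\lambda_T\rho + \tfrac1k\sum_j u_j\Big) \le 0
\]
by the first constraint, which proves the claim.

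The main obstacle is the Lipschitz/affine step. $T_{\text{hot}}^{\text{proxy}}$ is a max over zones of functions that are affine in $P_{\text{IT}}$ through the RC dynamics \eqref{eq:server_rc}. A max of affine maps is convex and globally Lipschitz, with constant equal to the largest dual norm among the zone gradients. So I would first try to interpret $\|\nabla_e T_{\text{hot}}^{\text{proxy}}\|_*(\mathbf{c}[t])$ as $\max_{z\in Z_{\text{hot}}[t]}\|\nabla_e T_{\text{core},z}\|_*$. That choice makes the bound rigorous globally rather than only locally. If the nonlinear inlet coupling breaks exact affinity, the result has to be stated relative to the local linearization, and conservativeness then holds only within its validity region.
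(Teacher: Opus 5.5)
Your proof is correct. It shares the paper's skeleton: Rockafellar--Uryasev with $\eta$ kept as a decision variable, a Wasserstein dual bound of the form empirical hinge average plus $\lambda_T\rho(\mathbf{c}[t])$, and epigraph variables $u_j$. The key step, however, rests on a different lemma.

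The paper assumes the proxy is ``locally convex'' in $e$, infers that the hinge loss is convex, and appeals to the strong-duality argument of Proposition~\ref{prop:dual_reformulation}. It never shows why \eqref{eq:cvar_dual_norm} collapses each inner supremum $\sup_{\mathbf{e}}\{h(\mathbf{e})-\lambda_T\|\mathbf{e}-\mathbf{e}_j\|\}$ to $h(\mathbf{e}_j)$, and it does not discuss the $\sup_P$/$\inf_\eta$ interchange.

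You need only three things: weak duality, the free inequality $\sup_P\inf_\eta\le\inf_\eta\sup_P$, and a Lipschitz bound on $h$. That is exactly what a conservative claim requires. It also shows that convexity is neither necessary nor, by itself, sufficient. What matters is that $\lambda_T$ dominates a \emph{global} Lipschitz constant of the proxy in the transport norm. For a max of affine zone responses, that constant is $\max_{z\in Z_{\text{hot}}[t]}\|\nabla_e T_{\text{core},z}\|_*$, not a gradient at one point.

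Your hedge about the inlet coupling can be dropped. In the paper's lumped model, for fixed cooling actions, the mixing weights and the return-air gain depend only on fan speeds. Hence $T_{\text{in}}$ and every $T_{\text{core},z}$ are affine in the residuals, and your interpretation is exact.

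If made rigorous (convex Lipschitz loss on an unbounded support, as in Esfahani--Kuhn), the paper's convexity-plus-strong-duality route would additionally make the worst-case hinge-expectation bound tight. The proposition only claims conservatism, so your more elementary route suffices.

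One slip to fix: $L(\mathbf{e})\le L(\mathbf{e}_j)+\nabla_e T_{\text{hot}}^{\text{proxy}}{}^\top(\mathbf{e}-\mathbf{e}_j)$ is the tangent inequality of a \emph{concave} function, and it fails for the convex max-of-affine proxy. State the hypothesis directly as $|L(\mathbf{e})-L(\mathbf{e}')|\le\|\nabla_e T_{\text{hot}}^{\text{proxy}}\|_*\|\mathbf{e}-\mathbf{e}'\|$, which is all your displayed chain actually uses.
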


\begin{proof}
The proof applies the same strong duality principles used in Proposition \ref{prop:dual_reformulation} to the CVaR function. CVaR can be expressed as a minimization over $\eta$ of expectations involving the hinge loss function $h(\mathbf{x}, e) = (T_{\text{hot}}^{\text{proxy}}(\mathbf{x}, e) - T_{\text{core,crit}} - \eta)^+$. Under the assumption that the proxy temperature response is locally convex in the residual $e$, the hinge loss remains convex. Thus, the worst-case expectation $\sup_{P} \mathbb{E}_P [h(\mathbf{x}, e)]$ admits a dual representation involving the empirical average of the hinge loss plus a regularization term $\lambda_T \rho(\mathbf{c}[t])$. The constraints in (\ref{eq:cvar_hinge1})--(\ref{eq:cvar_dual_norm}) enforce this dual upper bound to be non-positive. This yields a conservative but tractable counterpart.
\end{proof}

\subsubsection{Step 2: Non-Convex ADMM Decomposition}
Even after the deterministic reformulation, the problem remains a non-convex MISOCP due to the underlying physics of the cooling plant, specifically the cubic fan power laws ($P \propto s^3$) and the bi-quadratic chiller COP curves. To solve this efficiently, we decompose the global problem into two smaller distinct subproblems. We use a Data Hall Subproblem ($\mathcal{X}$) and a Central Plant Subproblem ($\mathcal{Z}$). These are coupled only by the total thermal load variable $Q$.

We form the Augmented Lagrangian function:

\vspace{-2mm}
{\small
\begin{align}
    \mathcal{L}_\rho(\mathbf{x}, \mathbf{z}, y) = &F_{DRO}(\mathbf{x}) + G_{DRO}(\mathbf{z}) \\&+ y(Q_x - Q_z) + \frac{\rho_{admm}}{2} \|Q_x - Q_z\|^2 \nonumber,
\end{align}
}
where $F_{DRO}$ and $G_{DRO}$ are the robust objectives derived in \eqref{eq:dual_form} for the respective subsystems. The Alternating Direction Method of Multipliers (ADMM) iteratively solves these subproblems.

For the Data Hall Optimization in the $\mathbf{x}$-update step, we aim to determine the optimal CRAC fan speeds $s_i[t]$ that minimize fan power and thermal risk. Crucially, this subproblem now incorporates the tractable DR-CVaR safety constraint defined in Proposition \ref{prop:dr_cvar_tractable} to enforce distributionally robust hotspot risk control in the data hall subproblem. The cubic power term $P_{\text{fan},i} = c_f \cdot s_i^3$ is approximated using a Piecewise Linear (PWL) function by partitioning the domain $[0, 1]$ into $M_{\text{pwl}}$ segments with breakpoints $\{v_0, \dots, v_{M_{\text{pwl}}}\}$. By introducing binary variables $\delta_{m,i}[t]$ and continuous variables $\alpha_{m,i}[t]$, we enforce the relationship:

\vspace{-2mm}
{\small
\begin{equation}
\begin{cases}
s_i[t] = \sum_{m=1}^{M_{\text{pwl}}} \left( v_{m-1}\delta_{m,i}[t] + \alpha_{m,i}[t] \right), \\[1mm]
P_{\text{fan},i}[t] = c_f \sum_{m=1}^{M_{\text{pwl}}} \left( v_{m-1}^3 \delta_{m,i}[t] 
+ \frac{v_m^3 - v_{m-1}^3}{v_m - v_{m-1}} \alpha_{m,i}[t] \right), \\[1mm]
0 \le \alpha_{m,i}[t] \le \delta_{m,i}[t](v_m - v_{m-1}), \quad \forall m=1,\dots,M_{\text{pwl}}, \\[1mm]
\sum_{m=1}^{M_{\text{pwl}}} \delta_{m,i}[t] = 1, \quad \delta_{m,i}[t] \in \{0,1\}, \quad \forall m=1,\dots,M_{\text{pwl}}.
\end{cases}
\end{equation}
}
This formulation enables us to solve the subproblem:

\vspace{-2mm}
{\small
\begin{equation}
    \mathbf{x}^{k+1} \leftarrow \arg\min_{\mathbf{x} \in \mathcal{X}} \left( F_{DRO}(\mathbf{x}) + \mathbf{y}^k Q_x + \frac{\rho_{admm}}{2} \|Q_x - Q_z^k\|^2 \right)
\end{equation}
}
using standard commercial solvers (using Gurobi), as it is transformed into a convex MISOCP subject to the reformulated robust safety constraints.

Subsequently, for the Central Plant Optimization in the $\mathbf{z}$-update step, the objective is to optimize the chiller setpoint $T_{chw}$. The non-convexity here stems from the bi-linear relation in chiller power: $P_{ch} \cdot \text{COP} = Q_{load}$. Let $w = P_{ch} \cdot \text{COP}$. To maintain tractability, we relax this non-convex constraint by replacing the bilinear term $w$ with its convex envelope, known as the McCormick relaxation. This standard technique bounds the variable $w$ within a convex polyhedral set defined by the variable bounds, effectively replacing the non-convex equality with a set of linear inequality constraints. This relaxation transforms the Central Plant subproblem into a tractable convex program, which is iteratively tightened within the ADMM loop:

\vspace{-2mm}
{\small
\begin{equation}
    \mathbf{z}^{k+1} \leftarrow \arg\min_{\mathbf{z} \in \mathcal{Z}} \left( G_{DRO}(\mathbf{z}) - \mathbf{y}^k Q_z + \frac{\rho_{admm}}{2} \|Q_x^{k+1} - Q_z\|^2 \right).
\end{equation}
}

Finally, the dual variable $y$ is updated to enforce the energy balance constraint $Q_x = Q_z$.

\vspace{-2mm}
{\small
\begin{equation}
    y^{k+1} \leftarrow y^k + \rho_{admm}(Q_x^{k+1} - Q_z^{k+1}).
\end{equation}
}

The complete online execution process, integrating the reformulated robust objectives and the ADMM decomposition scheme, is formally presented in Algorithm \ref{alg:admm_cdro_loop}.

\setlength{\textfloatsep}{4pt plus 1pt minus 2pt}
\begin{algorithm}[t]
\renewcommand{\baselinestretch}{1.1}\small
\captionsetup{font=small}
\caption{\small{Online Solution via Nested Duality and ADMM}}
\label{alg:admm_cdro_loop}
\begin{algorithmic}[1]
\REQUIRE Current context $\mathbf{c}[t]$, calibrated radius $\rho(\mathbf{c}[t])$
\ENSURE Optimal control actions $\mathbf{u}^*[t] = (\mathbf{x}^*, \mathbf{z}^*)$
\STATE Update server telemetry, identify $\mathcal{L}_{\text{hot}}[t]$, and map to $Z_{\text{hot}}[t]$ via $g(\cdot)$.
\STATE Initialize $\mathbf{x}^0, \mathbf{z}^0, \mathbf{y}^0, Q_{\text{couple}}^0$.
\STATE \textbf{for} iteration $k = 0, 1, 2, \dots$ until convergence \textbf{do}
\STATE \quad \textit{// Solve Data Hall Subproblem}
\STATE \quad Formulate $\mathbf{x}$-subproblem using \eqref{eq:dual_form} and Prop. \ref{prop:dr_cvar_tractable}.
\STATE \quad Apply PWL approximations to non-convex constraints.
\STATE \quad $\mathbf{x}^{k+1} \gets \text{SolveMISOCP}(\mathbf{z}^k, \rho(\mathbf{c}[t]), \dots)$
\STATE \quad \textit{// Solve Central Plant Subproblem}
\STATE \quad Formulate $\mathbf{z}$-subproblem using \eqref{eq:dual_form}.
\STATE \quad Apply PWL approximations to chiller/tower curves.
\STATE \quad $\mathbf{z}^{k+1} \gets \text{SolveMISOCP}(\mathbf{x}^{k+1}, \dots)$
\STATE \quad \textit{// Update Consensus and Dual Variables}
\STATE \quad Update coupling variable $Q_{\text{couple}}^{k+1}$ and dual $\mathbf{y}^{k+1}$.
\STATE \quad Check primal/dual residuals for convergence.
\STATE \textbf{end for}
\STATE \textbf{return} $\mathbf{u}^*[t] = (\mathbf{x}^{k+1}, \mathbf{z}^{k+1})$

\end{algorithmic}
\end{algorithm}
\vspace{-4mm}
\subsection{Convergence and Complexity Analysis}
\label{ssec:convergence}

\subsubsection{Convergence of Non-Convex ADMM}
Standard ADMM convergence proofs typically rely on the convexity of the objective functions, which does not hold for the cubic fan power laws and bi-quadratic chiller curves inherent to our model. However, our problem structure aligns with the class of non-convex problems analyzed by Wang et al. \cite{WangYinZeng2019}. Our problem is a multi-block optimization coupled by linear constraints. By verifying that our objective functions are coercive and that the coupling constraints satisfy Lipschitz continuity, we can adapt their theoretical framework to establish convergence to a stationary point. This property is crucial for real-time deployment because it prevents unbounded optimization oscillations and supports stable closed-loop implementation.

\subsubsection{Computational Complexity and Scalability}
The online execution solves two MISOCP subproblems per ADMM iteration. For a realistic deployment, the optimization size depends on the prediction horizon $H$, CRAC count $|I|$, thermal entities $|\mathcal{L}_z|$, k-NN sample size $k$, and the number of PWL segments $M_{\text{pwl}}$. The binary variables strictly arise from the PWL approximations, scaling predominantly as $N_{\text{bin}} \approx H \cdot |I| \cdot M_{\text{pwl}}$. Continuous variables scale roughly as $N_{\text{cont}} \approx c_1 H|I| + c_2 H|\mathcal{L}_z| + c_3 Hk$. This structural property, where the variable count grows linearly with the prediction horizon and spatial dimensions, ensures theoretical tractability for real-time control intervals.

\vspace{-2mm}


\section{Case Studies and Numerical Results}
\label{sec:experimental_evaluation}

To thoroughly evaluate the performance of the proposed CDRO framework, we developed a detailed simulation model of an AI data center and subjected it to a series of stress tests. These scenarios are designed to assess the controller's ability to manage the trilemma of thermal safety, economic cost, and carbon footprint under the deep uncertainty typical of grid-interactive AI infrastructure. Specifically, we validate how the proposed DR-CVaR safety constraint effectively limits thermal risks while enabling flexibility.
\vspace{-4mm}
\subsection{Simulation Environment and Data}
\label{ssec:simulation_env}

The experiments use an EnergyPlus Python co-simulation with a 5-minute control interval ($\Delta t = 5$ min). We model a high-density data hall equipped with $|I|=4$ CRAC units. The IT heat load $P_{\text{IT}}[t]$ uses Google Cluster Data traces with added random bursts. Queue features $f_{\text{queue}}$ are constructed directly from this trace data. These specific features include the rolling variance of job durations, a proxy for queue length, and a burstiness indicator based on load increments. Exogenous inputs include Houston TMY3 weather, 5-minute ERCOT LMPs \cite{ERCOT_NP6_788_LMP_5min}, and WattTime carbon-intensity data \cite{WattTime_API_V3, WattTime_AOER_Signal}.

Data is split chronologically into Training (60\%, for XGBoost forecasting and k-NN), Validation (20\%, for offline radius calibration), and Testing (20\%, for reported results). During test replay, the controller strictly accesses the available filtration $\mathcal{I}_t$ using rolling forecasts $\hat{\bm{\xi}}$.

The nominal online controller is configured with a prediction horizon $H=12$, $k=30$ nearest neighbors, and $M_{\text{pwl}}=5$ piecewise-linear segments. Optimization is solved using Gurobi 10.0 on an i7-13700F processor over $|\mathcal{L}_z| = 10$ controlled thermal entities for scalability. Concurrently, $|\mathcal{L}_s| = 200$ server telemetry channels are monitored strictly for hotspot identification and risk evaluation via a fixed mapping $g:\mathcal{L}_s \to \mathcal{L}_z$. We define the hotspot set $\mathcal{L}_{\text{hot}}[t]$ as the top $M_{\text{hot}}=10$ servers, applying the DR-CVaR constraint on the mapped proxy with a tail probability $\varepsilon=0.05$.

\begin{table}[t]
\centering
\caption{Rolling-window performance statistics over 30 non-overlapping 72-hour test windows and 5 random seeds (mean $\pm$ 95\% CI). TCO and emissions are percentage differences relative to deterministic MPC. Risk metrics are evaluated on server telemetry.}
\label{tab:overall_comparison}
\renewcommand{\arraystretch}{1.2}
\setlength{\tabcolsep}{4pt}
\begin{tabular}{@{}lcccc@{}}
\toprule
\textbf{Controller} 
  & \textbf{EVP(\%)} 
  & \textbf{TVI($^\circ$C$\cdot$h)}
  & $\boldsymbol{\Delta}$\textbf{TCO(\%)}
  & $\boldsymbol{\Delta}$\textbf{Emis.(\%)} \\
\midrule
PID           & $8.6 \pm 1.4$  & $36.8 \pm 6.9$ & $12.7 \pm 1.9$ & $10.5 \pm 1.6$ \\
MPC-Det.      & $3.3 \pm 0.5$  & $15.8 \pm 2.4$ & \textbf{Baseline} & \textbf{Baseline} \\
Static RO     & $0.03 \pm 0.02$ & $0.4 \pm 0.3$ & $18.1 \pm 1.4$ & $14.8 \pm 1.2$ \\
SP            & $1.3 \pm 0.4$  & $5.8 \pm 1.4$  & $4.4 \pm 0.8$  & $3.9 \pm 0.7$  \\
\midrule
Min-Max MPC   & $0.05 \pm 0.03$ & $0.5 \pm 0.4$ & $16.0 \pm 1.2$ & $13.2 \pm 1.0$ \\
CC-MPC        & $0.92 \pm 0.27$ & $4.4 \pm 0.9$ & $3.2 \pm 0.6$  & $2.4 \pm 0.5$  \\
CVaR-MPC      & $0.61 \pm 0.20$ & $3.1 \pm 0.8$ & $3.7 \pm 0.7$  & $2.9 \pm 0.6$  \\
NC-DRO        & $0.10 \pm 0.05$ & $0.8 \pm 0.5$ & $6.2 \pm 0.9$  & $4.9 \pm 0.8$  \\
\midrule
\textbf{CDRO} & $\mathbf{0.07 \pm 0.04}$ & $\mathbf{0.6 \pm 0.4}$ & $\mathbf{2.3 \pm 0.5}$ & $\mathbf{1.9 \pm 0.4}$ \\
\bottomrule
\end{tabular}
\end{table}
\vspace{-4mm}

\subsection{Experimental Design}
\label{ssec:design}

We compare CDRO against deterministic baselines (PID, Deterministic MPC) and uncertainty-aware controls configured as follows:
\begin{itemize}
    \item \textit{Hard-constrained Robustness}: Static RO uses historical extreme box sets; Min-Max MPC tightens constraints using the 99th percentile of training residuals.
    \item \textit{Nominal Risk-awareness}: SP optimizes over empirical samples; CC-MPC and CVaR-MPC enforce empirical risk targets ($\varepsilon=0.05$) based on historical error distributions.
    \item \textit{Distributional Robustness}: Standard DRO uses a globally fixed Wasserstein radius $\bar{\rho}^\star$ calibrated on the validation set without contextual differentiation.
\end{itemize}

For fairness, all optimization baselines use the same zone-level dynamic model ($\mathcal{L}_z$) and server-level risk evaluation ($\mathcal{L}_s$). We evaluate performance using a two-tier strategy:
\begin{itemize}
    \item \textit{Rolling-Window Statistics (Primary)}: $N=30$ non-overlapping 72-hour test windows, evaluated across $S=5$ random seeds (for AI burst generation) to yield the statistical metrics and confidence intervals in Table \ref{tab:overall_comparison}.
    \item \textit{Stress Scenarios (Qualitative)}: Three specific 72-hour events (AI Workload Spike, Extreme Climate, Grid Volatility) to physically interpret transient controller behaviors.
\end{itemize}
\vspace{-4mm}

\subsection{Performance Metrics}
\label{ssec:metrics}

To formally evaluate the rolling-window outcomes, we define the window-level metrics for any given window $W_i$. 

Thermal risk is quantified by the window Empirical Violation Probability (EVP) and the Thermal Violation Integral (TVI). Based on the dual granularity design, these risk metrics are evaluated strictly on the server-level telemetry to reflect the true thermal state. Let $T_{\text{hot}}^{\text{tele}}[t] = \max_{s \in \mathcal{L}_s} T_{\text{tele},s}[t]$ denote the peak server temperature observed at time $t$. Operationally, $T_{\text{core,max}}$ is the recommended operating limit (SLA threshold) used for continuous TVI penalty accumulation, while $T_{\text{core,crit}}$ is the strict hardware protection limit used to define emergency EVP violations:
  
\vspace{-4mm}
{\small
\begin{align}
    \text{EVP}(W_i) &= \frac{1}{|W_i|} \sum_{t \in W_i} \mathbb{I}(T_{\text{hot}}^{\text{tele}}[t] > T_{\text{core,crit}}) ,\\
    \text{TVI}(W_i) &= \sum_{t \in W_i} \max(0, T_{\text{hot}}^{\text{tele}}[t] - T_{\text{core,max}}) \Delta t.
\end{align}
}

For economic and environmental impacts, the total operational cost is computed as $\text{TCO}(W_i) = \sum_{t \in W_i} C_{\text{elec}}[t] \cdot E_{\text{total}}[t]$. 
Crucially, while the optimization objective in (\ref{eq:objective_full}) uses a monetized carbon cost ($\kappa_{\text{CO}_2} \cdot CI \cdot E$) to drive decisions, our reported environmental metric strictly tracks the \textit{physical} carbon emissions $\text{Emissions}(W_i) = \sum_{t \in W_i} CI_{\text{grid}}[t] \cdot E_{\text{total}}[t]$ in kgCO$_2$. This decoupling ensures that the reported environmental benefits are not artifacts of arbitrary carbon pricing parameters.

\vspace{-3mm}
\subsection{Performance Analysis}
\label{ssec:results_analysis}

Table \ref{tab:overall_comparison} summarizes the statistical performance across the rolling-window evaluation. Confidence Intervals (CIs) are computed using a paired bootstrap method across the $N \times S$ window-seed samples to rigorously support our claims.

\subsubsection{Attribution of Performance Gains}
Table \ref{tab:overall_comparison} highlights two primary sources of CDRO's performance gains. First, distributional robustness is essential for safety. Nominal methods like CVaR-MPC exhibits a substantially higher out-of-sample EVP than CDRO under distribution shift. In contrast, CDRO improves robustness to distributional shifts in the evaluated scenarios by optimizing against a Wasserstein ambiguity set, achieving a mean EVP of 0.07\%. Second, context awareness drives economic efficiency. While NC-DRO also achieves near-zero empirical violations (EVP) in our test suite, its static radius incurs a $+6.2\%$ cost premium. CDRO dynamically shrinks its radius during low-volatility contexts, yielding a highly significant cost reduction compared to NC-DRO (Wilcoxon signed-rank $p < 0.01$). Finally, CDRO reduces the robustness premium by 13.7 percentage points compared with Min–Max MPC by using probabilistic DR-CVaR bounds instead of rigid worst-case constraint tightening, allowing it to safely ride thermal margins.

\subsubsection{Robustness to Workload Uncertainty (Scenario 1)}
To qualitatively explain these statistical results, Scenario 1 exposes the transient fragility of forecast-dependent controls during a sudden workload burst. As illustrated in Fig.~\ref{fig:heatwave_results}, deterministic MPC fails to anticipate the burst magnitude. CVaR-MPC mitigates the violation but remains sensitive to the heavy-tail residuals, resulting in minor safety breaches. NC-DRO and CDRO achieve near-zero empirical hotspot violations in this scenario by using distributionally robust objective optimization and DR-CVaR hotspot risk control. However, due to its fixed global conservatism, NC-DRO incurs unnecessary pre-cooling costs by ramping up long before the burst. In contrast, CDRO leverages the job queue context to detect rising volatility and expands its ambiguity set "just-in-time", effectively buffering the thermal shock with minimal wasted energy.

\begin{figure}[t]
    \centering
\includesvg[width=0.5\textwidth]{scenario1}
\caption{Thermal response during an AI workload spike. CDRO proactively maintains safety without the excessive conservatism of standard DRO.}
    \label{fig:heatwave_results}
        \vspace{-4mm}
\end{figure}

\subsubsection{Resilience Under Physical Constraints (Scenario 2)}
Fig.~\ref{fig:scenario2} presents the trade-off between cost (TCO) and risk (EVP) for all methods during a heatwave scenario. CDRO consistently occupies the ideal region. Static methods like RO and Min-Max MPC fall into the ``High Cost'' region. Nominal risk methods (CVaR/CC-MPC) offer intermediate performance but do not consistently achieve the low empirical EVP levels attained by CDRO. Compared directly with these advanced baselines, CDRO achieves lower operational costs at a matched level of risk due to its context-adaptive radius and precise residual retrieval.

\begin{figure}[t]
    \centering
 \includesvg[width=0.47\textwidth]{scenario2}
    \caption{Pareto trade-off for Heatwave (TCO vs. EVP). CDRO outperforms uncertainty-aware baselines (points) by achieving the lowest cost at a safe risk level.}
    \label{fig:scenario2}
    \vspace{-2mm}
\end{figure}

\subsubsection{Grid-Interactive Decision Making (Scenario 3)}
Figure \ref{fig:scenario3} illustrates power response to a price spike. Unlike Min-Max MPC, which maintains a rigid safety buffer, CDRO's context-aware ambiguity set allows it to identify periods of internal stability. During these windows, it tightens the robustness radius $\rho$, allowing the system to reduce cooling power (shed load) to exploit high electricity prices without violating conditional risk limits. This explains its significant cost advantage over NC-DRO, which relies on a fixed robustness radius and cannot relax constraints to fully exploit price arbitrage.

\begin{figure}[t]
    \centering
 \includesvg[width=0.5\textwidth]{scenario3}
    \caption{Power consumption during grid volatility. CDRO safely maximizes load shedding during high-price periods compared to static DRO.}
    \label{fig:scenario3}
        \vspace{-5mm}
\end{figure}
\vspace{-4mm}
\subsection{Aggregate Discussion and Sensitivity Analysis}
\label{ssec:discussion}

\subsubsection{Radius Calibration and Sensitivity}
We validated the effectiveness of our data-driven calibration approach. Figure \ref{fig:calibration_curve} illustrates the trade-off between conservatism (Cost) and safety (Risk) as the Wasserstein radius $\rho$ varies.
The solid lines represent the Context-Aware approach, while the dashed lines represent the Non-Contextual approach.
CDRO achieves the target safety level at a significantly lower average radius than NC-DRO. This visually confirms the statistical findings from the rolling-window evaluation: contextual information functionally translates into "cheaper" robustness.

\begin{figure}[t]
    \centering
\includesvg[width=0.5\textwidth]{calibration_comparison}
\caption{Regime-wise radius calibration. Contextual differentiation enables CDRO to guarantee safety at a lower average robustness cost.}
    \label{fig:calibration_curve}
        \vspace{-2mm}
\end{figure}

\subsubsection{Sensitivity to Risk Preference (\texorpdfstring{$\varepsilon$}{epsilon})}
A key contribution of our work is the ability to explicitly tune the safety-cost trade-off. We performed a sensitivity analysis by varying the DR-CVaR risk level $\varepsilon \in \{0.01, \dots, 0.25\}$. As shown in Fig.~\ref{fig:risk_sensitivity}, the Risk-Cost Pareto Frontier exhibits a convex shape. CDRO operates at the optimal "knee point," efficiently navigating the non-linear relationship between acceptable risk and operational expenditure.

\begin{figure}[t]
    \centering
 \includesvg[width=0.48\textwidth]{risk_sensitivity_analysis}
\caption{Risk-Cost Pareto frontier. Varying the DR-CVaR tail parameter $\varepsilon$ provides explicit tuning between empirical safety and operational expenditure.}
    
    \label{fig:risk_sensitivity}
        \vspace{-5mm}
\end{figure}

\subsubsection{Robustness under Forecast Degradation}
Figure \ref{fig:noise_sensitivity} compares performance under varying forecast noise levels. Deterministic MPC and Nominal Risk methods (e.g., CVaR-MPC) degrade rapidly as forecasts worsen. In contrast, both CDRO and NC-DRO retain low empirical violation rates under forecast degradation relative to nominal baselines by using their calibrated ambiguity sets. However, CDRO maintains this out-of-distribution robustness at a strictly lower cost than NC-DRO across the entire noise spectrum.

\begin{figure}[t]
    \centering
\includesvg[width=0.48\textwidth]{noise_robustness}
    \caption{Empirical violation probability under forecast degradation. CDRO sustains calibrated safety targets against increasing noise, outperforming nominal methods.}
    \label{fig:noise_sensitivity}
\vspace{-1mm}
\end{figure}
\subsubsection{Real-time Feasibility and Scalability}
To ensure the CDRO controller is viable for standard 5-minute ($\Delta t$) dispatch intervals, we evaluated its computational scalability. The nominal configuration requires a median solve time of only 12.4 seconds. A small scaling sweep confirms that the runtime grows near-linearly with the horizon and spatial dimensions within practical ranges: when the prediction horizon $H$ doubles from 12 to 24, the median solve time increases from 12.4s to 25.1s; increasing the CRAC count $|I|$ from 4 to 8 yields 26.5s; and increasing the binary-intensive PWL segments $M_{\text{pwl}}$ from 5 to 7 yields 22.8s (all with fixed $k=30$). This scaling behavior strictly validates the controller's real-time feasibility for deployment in high-density data centers.

\vspace{-2mm}
\section{Conclusion}
\label{sec:conclusion}
This paper proposed a CDRO framework to address thermal management challenges in grid-interactive AI data centers. By applying a distributionally robust CVaR constraint on the critical hotspot channel, the controller dynamically adjusts its robustness radius based on real-time computational and grid volatility signals. This context-aware approach overcomes the limitations of brittle deterministic forecasts and overly conservative static bounds. High-fidelity rolling-window simulations demonstrate that CDRO effectively mitigates forecast error risks, empirically achieving near-zero thermal violations on critical hotspot servers in our test replay. Simultaneously, by relaxing conservatism during periods of low uncertainty, it reduces the operational cost premium of robustness by approximately 13.7 percentage points compared to standard min-max approaches.

\vspace{-5mm}

\bibliographystyle{IEEEtran}
\bibliography{bibtex/IEEEexample}

\end{document}